\newtheorem{theorem}{Theorem}
\newtheorem{definition}{Definition}
\newtheorem{remark}[definition]{Remark}
\newcommand{\bigO}[1]{O{\left(#1\right)}}
\newcommand{\smallO}[1]{o{\left(#1\right)}}
\begin{document}

\title{The Degrees of Freedom of the $K$-pair-user Full-Duplex Two-way Interference Channel with and without a MIMO Relay}
\author{\IEEEauthorblockN{Zhiyu Cheng, Natasha Devroye\\
University of Illinois at Chicago \\
zcheng3, devroye@uic.edu
}
}
\maketitle

\begin{abstract}

In a $K$-pair-user two-way interference channel (TWIC), $2K$ messages and $2K$ transmitters/receivers form a $K$-user IC  in the forward direction ($K$ messages) and another $K$-user IC in the backward direction which operate in full-duplex mode. All nodes may  interact, or adapt inputs to past received signals. We derive a new outer bound to demonstrate that the optimal degrees of freedom (DoF, also known as the multiplexing gain) is $K$: full-duplex operation doubles the DoF, but interaction does not further increase the DoF. We next characterize the DoF of the $K$-pair-user TWIC with a MIMO, full-duplex relay.  If the relay is non-causal/instantaneous (at time $k$ forwards a function of its received signals up to time $k$) and has $2K$ antennas, we demonstrate a one-shot scheme where the relay  mitigates all interference to achieve the interference-free $2K$ DoF.  In contrast, if the relay is causal (at time $k$ forwards a function of its received signals up to time $k-1$), we show that a full-duplex MIMO relay cannot increase the DoF of the $K$-pair-user TWIC beyond $K$, as if no relay or interaction is present. 
We  comment on reducing the number of antennas at the instantaneous relay.

\end{abstract}

\section{Introduction}


In wireless communications, current two-way systems often  employ either time or frequency division to achieve two-way or bidirectional communication. This restriction is due to a combination of hardware and implementation imperfections and effectively orthogonalizes the two directions, rendering the bidirectional channel equivalent to two one-way communication systems. 
However, much progress has been made on the design of full-duplex  wireless systems \cite{Katti:2011, Sahai:2012}, which show great promise for increasing data rates in future wireless technologies. In this work we seek to understand the potential of full duplex systems in a two-way multi-user or network setting, and do so from a multi-user information theoretic perspective by obtaining the degrees of freedom of several full-duplex two-way networks with and without relays. 

Full-duplex operation enables true two-way communications over the practically relevant Gaussian noise channels. We currently understand the theoretical limits of a point-to-point, full-duplex Gaussian two-way channel where two users wish to exchange messages over in-band Gaussian channels in each direction: the capacity region is equal to two independent Gaussian noise channels operating in parallel \cite{Han:1984}. Full duplex operation thus roughly doubles the capacity of this simple two-way network. 

To extend our understanding of the impact of full-duplex operation to {\it two-way networks with interference}, the two-way interference channel  (TWIC) has been studied in \cite{zcheng_ISIT, Cheng:2012, zcheng_Allerton2012, Suh2012}, in which there are 4 independent messages:  two-messages to be transmitted over an interference channel (IC) in the $\rightarrow$ direction simultaneously with two-messages to be transmitted over an in-band IC in the $\leftarrow$ direction. 
All 4 nodes in this network act as both sources and destinations of messages. This allows for {\it interaction} between the nodes: a node's channel inputs may be functions of its message and previously received signals. The capacity region of the point-to-point  two-way channel is still open in general, though we know the capacity for the Gaussian channel, and is known to be a remarkably difficult problem. Similarly, the capacity region of the one-way IC is still open, though we know its capacity to within a constant gap for the Gaussian noise channel \cite{etkin_tse_wang}. In general then, finding the full capacity region of the full-duplex  TWIC is a difficult task, though progress has been made for several classes of deterministic channel models \cite{Cheng:2012}, and capacity is known to within a constant gap  in certain parameter regimes and adaptation constraints \cite{zcheng_Allerton2012,Cheng:2012, Cheng:GlobalSIP}. 

The degrees of freedom (DoF) \cite{Zheng: Dof} is an alternative (to the all out, challenging capacity region) approximate capacity characterization that intuitively corresponds to the number of independent interference-free signals that can be communicated in a network at high signal to noise ratios (SNR) which has been of significant recent interest in one-way  interference networks \cite{Jafar:2008:alignment, Jafar:2009:relays}. Here, we seek to extend our understanding of the DoF to  two-way networks, whose study is motivated by the fact that full-duplex operation is becoming practically realizable. 
Some progress has already been made: the DoF of the full-duplex  TWIC has been shown to be 2 \cite{zcheng_ISIT, zcheng_Allerton2012}. This is interesting, because  the capacity of any network with interaction at nodes is no smaller than that of the same network where interaction is not possible (interaction can mimic non-interaction).  However, that the TWIC with interact has DoF 2  demonstrates that  interaction between users does {\it not} increase the DoF of the two-way IC beyond the doubling that full-duplex operation gives. We ask whether the same is true for $K$-pair user two-way, full-duplex interference channels with and without a MIMO relay node.



\subsection{Contributions}
In this work, we first propose and study a natural extension of the (2-pair-user) two-way interference channel (TWIC): the $K$-pair-user two-way interference channel, i.e., there are $2K$ messages and $2K$ users forming a $K$-user IC ($K$ messages) in the forward direction and another $K$-user IC in the backward direction ($K$ messages). We consider this $2K$ node network with and without the presence of a MIMO relay. 
All nodes may employ interaction -- i.e. signals may be a function of previously received outputs. Compared to the 2-pair-user IC,  the $K$-pair-user two-way IC experiences interference from significantly more users:  not only may a receiver see signals from all other transmitters transmitting in the same direction, but due to the adaptation involved, these signals may also contain information about users transmitting in the opposite direction. Hence, any user may see a combination of the signals of all other $2K-1$ users in addition to seeing 
self-interference (SI) signals, which are transmitted  by the user itself or received via other signals due to adaptation.  
Canceling SI is one of the main challenges in real full-duplex wireless systems.  However, in this theoretical work for the Gaussian channels involved,   the self-interference is known to the receiver and as such, theoretically, it can be subtracted off. We then explore the limits of communication under the assumption that this self-inteference may be removed. Our main results are:

1) We first show that the sum degrees of freedom of the $K$-pair-user TWIC  is $K$, i.e. $K/2$ in each direction, for both time-varying and (almost all) constant channel coefficients. In other words, each user still gets half a DoF and interaction between users -- even though our outer bound permits it -- is again useless from a DoF perspective. Intuitively this is because all the links in the network have similar strengths in the DoF sense, so that a user cannot ``route'' other users' desired signals through backward links since they are occupied by its own data signals. 
 In addition, coherent power gains which may be the result of adaptation and the ability of nodes to correlate their channel inputs, do not affect the DoF (i.e. coherent power gains for Gaussian channels lead to additive power gains inside the logarithm rather pre-log, or DoF/multiplexing gains). Achievability follows from known results of the one-way $K$-user IC; the contribution lies in the novel outer bound.
 Full-duplex operation is thus seen to double the DoF, but interaction is not able to increase the DoF beyond this.

2)  We next consider the $K$-pair-user TWIC with an additional, multi-antenna, full-duplex relay node which does not have a message of its own and only seeks to aid the communication of the $K$-pair users. We ask whether the presence of such a relay node may increase the DoF. 
Interestingly, we show that while the DoF of the $K$-pair-user  TWIC is $K$ -- indicating that interference is present and somewhat limiting rates in the $K$-user IC in each direction -- that the presence of an {\it instantaneous} MIMO relay with $2K$ antennas may increase the DoF to the maximal value of $2K$, i.e. each user in the network is able to communicate with its desired user in a completely interference-free (in the DoF sense) environment. The key assumption needed is for the relay to be  non-causal or  instantaneous -- meaning that at time $k$ it may forward a signal based on the received up to and including time $k$. 
We see that full-duplex operation combined with instantaneous / non-causal relaying with multiple antennas may in this case quadruple the DoF over the one-way $K$-user IC. 

3) Finally, we show a result which is sharp contrast to the previous point: if the relay is now causal instead of non-causal, meaning that at time $k$ it may only forward a signal which depends on the received signals up to and including time $k-1$, then we derive a novel outer bound which shows that the DoF of the $K$-pair-user TWIC with a causal MIMO relay is $K$ (regardless of the number of antennas at the relay). This is the same as that achieved without a relay, and without interaction. In summary,  full-duplex operation again doubles the DoF, but a causal, full duplex relay is unable to increase the DoF beyond that. 
 

%
 
\subsection{Related Work}

The degrees of freedom of a variety of one-way communication networks have been characterized \cite{Jafar:2009:relays, Jafar:2009:X, Etkin:dof, Ke2011, Vaze:dof}. However, much less is known about the DoF of two-way communications. Very recently, \cite{Lee:2013} considered a {\it half-duplex} two-pair two-way interference channel (where nodes other than the relay may {\it not} employ interaction and hence are much more restricted than the nodes here, i.e. transmit signals are functions of the messages only and not past outputs) with a 2-antenna relay and showed that 4/3 DoF are achievable.  No converse results where provided.  In \cite{Wang:2013}, the authors identified the DoF of the full-duplex 2-pair and 3-pair two-way multi-antenna relay MIMO interference channel, in which there is no interference between users who only communicate through the relay (no direct links). 
We consider direct links between all users (not in the same sides) in the two-way interference channels, as well as links between all users and the relay. We also note that the general results of \cite{Jafar:2009:relays}, which state that relays, noisy cooperation, perfect feedback and full-duplex operation does not increase the DoF of one-way networks,  do not apply, as we consider nodes which are both sources and destinations of messages (two-way rather than one-way).

The $K$-user interference channel, as an extension of the 2-user interference channel, information theoretically models wireless communications in networks involving more than two-pairs of users. Using the idea of interference alignment \cite{maddah-ali:IA, Jafar:2008:alignment, Jafar:IA}, the DoF of the $K$-user (one-way) IC for both time-varying channels and (almost all)\footnote{The precise definition of ``almost all'' may be found in \cite{Motahari2009}.} constant channels has been shown to be $K/2$ in \cite{Jafar:2008:alignment} and \cite{Motahari2009} respectively. The generalized DoF of the $K$-user IC without and with feedback have been characterized in \cite{Jafar_Gdof2010} and \cite{Mohajer2012} (full feedback from receiver $i$ to transmitter $i$) respectively. Authors in \cite{Shomorony_Allerton} showed that for almost all constant channel coefficients of  fully connected two-hop wireless networks with $K$ sources, $K$ relays and $K$ destinations (source nodes are not destination nodes as they are here, i.e. the network is one-way), the DoF is $K$.

We note that our work differs from prior work in that we consider an {\it interactive}, {\it full-duplex} Gaussian $K$-pair-user TWIC for the first time, with and without a relay (which may be either non-causal or causal), and obtain not only sum-rate achievability but also converse DoF results for all three general channel models considered. We emphasize that we seek information theoretic DoF results, which act as benchmarks / upper bound for the performance of realistic scenarios. 


\subsection{Outline}
We present the system model for the $K$-pair-user TWIC with and without a relay in Section \ref{model}. 
We derive a new outer bound  to show that $K$ DoF is optimal for the Gaussian $K$-pair-user TWIC in Section \ref{KDoF}; achievability follows by considering two non-adaptive one-way $K$-user IC schemes. 
Then we proceed to consider the $K$-pair-user TWIC with an instantaneous MIMO relay in Section \ref{2KDoF}, where we show that the maximum $2K$ DoF  may be achieved with the help of an instantaneous MIMO relay with at least $2K$ antennas. We demonstrate a one-shot achievability scheme. We comment on the possibility of reducing the number of antennas at the instantaneous relay node. In Section \ref{sec:causal} we then show that if the relay is causal rather than non-causal, that, for the $K$-pair-user TWIC, the presence of a full-duplex, multi-antenna relay cannot increase the DoF beyond $K$ (which is achievable without relays and without interaction, but requires full-duplex operation). This is done by developing a new outer bound which allows for interaction and causal relaying.  We conclude the paper in Section \ref{conclusion}.

\section{System Model}
\label{model}
We describe the $K$-pair-user TWIC without and with a relay in this section.
\subsection{$K$-pair-user two-way interference channel}
We consider a $K$-pair-user TWIC as shown in Fig. \ref{fig:kuseric}, where  there are $2K$ messages and $2K$ terminals forming a $K$-user IC in the $\rightarrow$ direction ($K$ messages) and another $K$-user IC in the $\leftarrow$ direction ($K$ messages). All nodes are able to operate in full-duplex mode, i.e. they can transmit and receive signals simultaneously. 
\begin{figure}
\begin{center}
\includegraphics[width=2in]{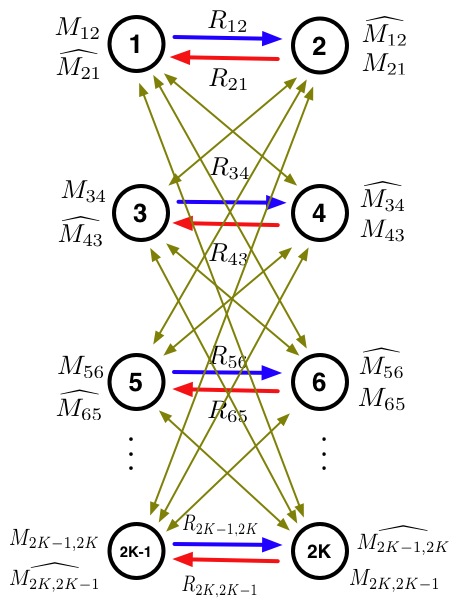}
\caption{$K$-pair-user two-way interference channel. $M_{ij}$ denotes the message known at node $i$ and desired at node $j$ of rate $R_{ij}$; $\widehat{M_{ij}}$ denotes that $j$ would like to decode the message $M_{ij}$ from node $i$.} 
\label{fig:kuseric}
\end{center}
\end{figure}

At each time slot $k$, the system input/output relationships are described as:
\begin{align}
&Y_p[k]=\sum_{m=1}^Kh_{2m,p}[k]X_{2m}[k]+Z_p[k], \ \ \ p=1,3,...,2K-1\label{sm1}\\
&Y_q[k]=\sum_{m=1}^Kh_{2m-1,q}[k]X_{2m-1}[k]+Z_q[k], \ \ \ q=2,4,...,2K\label{sm2}
\end{align}
where $X_l[k], Y_l[k], l \in \{1,2,...,2K\}$ are the inputs and outputs of user $l$ at time slot $k$, and $h_{ij}[k], i,j \in \{1,2,...,2K\}$ is the channel coefficient from node $i$ to node $j$ at time slot $k$.\footnote{Note that If user $l$ is transmitting, we have already assumed that its own ``self-interference'' signal has been ideally subtracted off its received signal, and is hence not present in the above description of channel inputs and outputs. This idealization will of course form an upper bound on what is possible if full self-interference cancellation is not possible, which is outside the scope of this paper and is an interesting topic for future work.}
The network is subject to complex Gaussian noise $Z_l[k] \sim \mathcal{CN} (0,1), l\in \{1,2,...,2K\}$ which are independent across users and time slots. 
We consider time-varying channel coefficients, which for each channel use are all drawn from a continuous distribution (which need not be the same for all channel gains and time instances, as long as they are continuous) and  whose absolute values are bounded between a nonzero minimum value and a finite maximum value. Note one can also alternatively consider a frequency selective rather than time-varying system model.

We further assume per user, per symbol power constraints $E[|X_i[k]|^2]\leq P, i\in \{1,2,...,2K\}, \; k\in\{1,2,\cdots n\}$, for block length $n$.\footnote{In our outer bound in Theorem \ref{kdof}, several of the terms may be extended to per user average power constraints, but we leave the per symbol power constraints for simplicity in this initial study, as is often done in degree of freedom results.}  
User $2i-1$ and $2i$ wish to exchange messages for $i=1,2,\cdots K$ (user 1 sends to 2, 2 to 1,..., 2K-1 to 2K, 2K to 2K-1) 
with {\it interactive}  encoding functions \[ X_i[k]=f(M_{ij}, Y_i^{k-1}), \;\;\;\; k=1,2,\cdots n \] at rate $R_{i,j}=\frac{\log_2|M_{ij}|}{n}$,
where $Y_i^{k-1}$ denotes the vector $(Y_i[1], \cdots Y_i[k-1])$ from time slot, or channel use $1$ to $k-1$ received at user $i$, and $n$ denotes the total number of channel uses (the blocklength). 
In other words,  all users in this network can adapt current channel inputs to previously received channel outputs. The nodes  $2i-1$ and $2i$ have decoding functions which map $(Y_{2i-1}^{n}, M_{2i-1, \; 2i})$ to an estimate of $M_{2i, \; 2i-1}$ and $(Y_{2i}^{n}, M_{2i, \; 2i-1})$ to 
$M_{2i-1, \; 2i}$, respectively. 
A rate tuple $(R_{i,i+1}(P),R_{i+1,i}(P))_{i\in \{1,3,...,2K-1\}}$, where we use the argument $P$ simply to remind the reader that this rate is indeed a function of the  power constraint $P$,  is said to be achievable if there exist a set of interactive encoders and decoders such that the desired messages may be estimated with arbitrarily small probability of error when the number of channel uses $n$ tends to infinity. The sum DoF characterizes the sum capacity of this Gaussian channel at high SNR and is defined as the maximum over all achievable $(R_{i,i+1}(P),R_{i+1,i}(P))_{i\in \{1,3,...,2K-1\}}$ of
\begin{align*}
d_{sum}&=\sum_{i=1,3,...,2K-1}(d_{i,i+1}+d_{i+1,i})\\
&=\limsup_{P\rightarrow\infty} \frac{\sum_{i=1,3,...,2K-1}(R_{i,i+1}(P)+R_{i+1,i}(P))}{\log (P)}.
\end{align*}
Notice the implicit definitions of the DoF of the link from user $i$ to user $i+1$, $d_{i,i+1}$ and the reverse $d_{i+1,i}$. 

\subsection{$K$-pair-user two-way interference channel with a MIMO relay}
\label{modelrelay}
We consider a $K$-pair-user two-way interference channel with a MIMO relay as shown in Fig. \ref{fig:kuser}. All the system settings are the same as in the previous section except there is a MIMO relay which helps in communicating messages and managing interference in the network. As before, all nodes including the relay are able to operate in full-duplex mode, or transmit and receive at the same time over the same channel, and perfectly cancel out their self-interference.

\begin{figure}
\begin{center}
\includegraphics[width=2in]{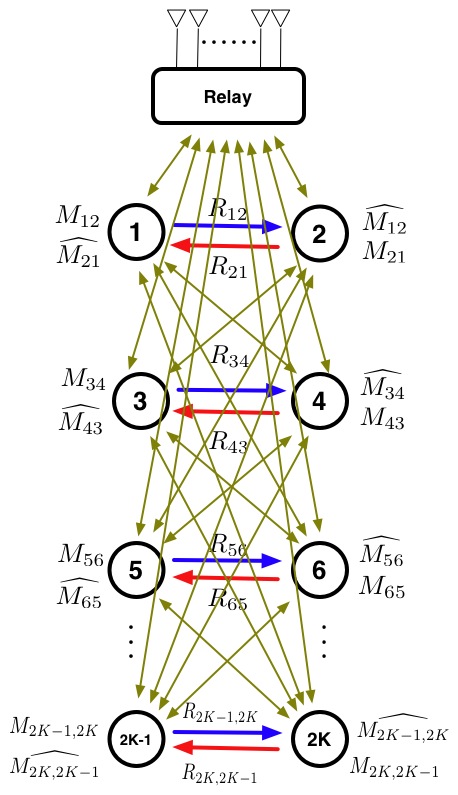}
\caption{$K$-pair-user two-way interference channel with a MIMO relay. $M_{ij}$ denotes the message known at node $i$ and desired at node $j$; $\widehat{M_{ij}}$ denotes that $j$ would like to decode the message $M_{ij}$ from node $i$.} 
\label{fig:kuser}
\end{center}
\end{figure}

The relay is assumed to have $M$ antennas and to operate either in a non-causal or ``instantaneous'' fashion, or in a causal fashion. By ``instantaneous'' (non causal, relay-without-delay \cite{ElGamal2005}) we refer to its ability to decode and forward signals received at the previous and {\it current} (but {\it not} future) time slots. We note that this requirement is significantly less strict than a {\it cognitive} relay, which would know all users' signals prior to transmission and does not obtain the messages over the air. We will comment more on the usage / impact of a cognitive relay in Section \ref{subsec:antennas}. 
Here messages are obtained over the air; the only idealization is the non causality or access to received signals from the current time slot. 
An alternative motivation for this type of instantaneous relay may be found in \cite{LeeWang2013}.
Mathematically, we may describe non causal and causal relaying functions, for each $k=1,2,\cdots n$,  as
\begin{align*}
&\mbox{Non-causal / instantaneous relaying: }\\
& \;\;\;\;\;\;\;\; \mathbf{X_R}[k]=g_k(\mathbf{Y_R}[1], \mathbf{Y_R}[2],...,\mathbf{Y_R}[k]) \\
&\mbox{Causal relaying: }\\
& \;\;\;\;\;\;\;\; {\bf X_R}[k] = g_k({\bf Y_R}[1], {\bf Y_R}[2], \cdots {\bf Y_R}[k-1]),
\end{align*}
where $\mathbf{X_R}[k]$ is a $M \times 1$ ($M$ antennas) vector signal transmitted by the relay at time slot $k$; $g_k()$ is a deterministic function; and $\mathbf{Y_R}[l], l\in \{1,2,...,k\}$ is the  $M \times 1$ vector of signals received at the relay  at time slot $l$. The relay is subject to per symbol transmit power constraints over all antennas $E[||\mathbf{X_R}[k]||_2^2]\leq P_R$, $\forall k\in \{1,2,\cdots n\}$, and global channel state information knowledge is assumed at all nodes. 
%
%
%
%
%
At each time slot $k$, the system input/output relationships are:
\begin{align}
&Y_p[k]=\sum_{m=1}^Kh_{2m,p}[k]X_{2m}[k]+\mathbf{h_{Rp}^*}[k]\mathbf{X_R}[k]+Z_p[k], \label{snr1}\\
& \hspace{5cm} p=1,3,...,2K-1 \nonumber \\
&Y_q[k]=\sum_{m=1}^Kh_{2m-1,q}[k]X_{2m-1}[k]+\mathbf{h_{Rq}^*}[k]\mathbf{X_R}[k]+Z_q[k], \label{smr2}\\
& \hspace{5cm} q=2,4,...,2K \nonumber \\ 
& \mathbf{Y_R}[k]=\sum_{m=1}^{2K}\mathbf{h_{m,R}}[k]{X_m}[k]+\mathbf{Z_R}[k]\label{smr3}
\end{align}

where we use the same notation as in \eqref{sm1} and \eqref{sm2}. In addition, $\mathbf{h_{ij}}[k], i,j \in \{1,2,...,2K,R\}$ is the $M\times 1$-dimensional channel coefficient vector from node $i$ to node $j$ at time slot $k$ ($i$ or $j$ must be the relay node $R$), and $\mathbf{Z_R}[k]\sim \mathcal{CN} (\mathbf{0},\mathbf{I})$ is the complex Gaussian noise vector at the relay. The terms in bold represent vectors (due to the MIMO relay). We use $^*$ to denote conjugate transpose and $^T$ to denote transpose.

 \subsection{Types of signals}
 
Let $s_{ij}$ denote the independent information symbols (signals) from  transmitter $i$ to receiver $j$; these are real numbers which will be combined into the signals $X_i[k]$ transmitted by node $i$ at channel use $k$. 
The received signal at any given node may be broken down into four types of signals: 
\begin{itemize}
\item the  self-interference signal (SI, sent by itself, known to itself);
\item the interference signal (sent by the undesired user(s) from the opposite side);
\item  the desired signal (sent by the desired user);
\item the undesired signal (sent by the undesired user(s) from the same side), respectively. 
\end{itemize}
For example, at receiver 1, $s_{12}$ is a self-interference signal (SI); $s_{43}, s_{65}, \cdots, s_{2K, 2K-1}$ are the interference signals; $s_{21}$ is the desired signal, and $s_{34}, s_{56}, \cdots, s_{2K-1,2K}$ are the undesired signals. 
Note that we differentiate between interference and undesired signals (both of which in fact do not carry any messages desired by a particular node) as they will be treated in different ways in the following: interference signals may originate from other users (via direct links) or the relay and neutralized (combined with the direct links over which they are received to cancel the interference) by choice of the relay beam forming vector, while undesired signals would be received from the relay node only, but will be nulled by proper choice of beam forming vectors at the relay.


Note we have already removed self-interference signals from the input/output equations \eqref{sm1}-\eqref{smr2}, but SI terms may still be transmitted by the relay (or other users due to adaptation) and hence received. 

\section{DoF of $K$-pair-user two-way IC}
\label{KDoF}
In this section we show that the degrees of freedom of the $K$-pair-user two-way IC is $K$, i.e. $K/2$ in each direction (the DoF of a one-way $K$-user IC is $K/2$ \cite{Jafar:2008:alignment}), for both time-varying and  (almost all) constant channel coefficients. This result indicates that while the full-duplex operation essentially doubles the DoF,  interaction between users cannot further increase the DoF beyond what full-duplex allows. 
This may be intuitively explained as follows: 1) the DoF measures the number of clean information streams that may be transmitted at high SNR when the desired signals and interference signals are received roughly ``at the same level'' (SNR and INR scale to infinity at the same rate).  In this case, rates cannot be improved by having users send messages of other users to re-route the message (i.e. message from user 1 to 2 could go via 1 to 4, 4 to 3 then 3 to 2 instead) as all links are equally strong. One would thus need to tradeoff one's own rate to relay another user's rate given the symmetry in the channels.  2) Adaptation allows for the correlation of messages at transmitters. In Gaussian channels, such correlation may be translated into coherent power gains inside the logarithm. The DoF metric is insensitive to coherent power gains as it measures pre-logarithm gains, not constant power factor gains inside the logarithm, and hence even correlation between inputs which adaptation/interaction would permit does not improve the DoF.

The main result of this section is stated in the following theorem:

\smallskip 
\begin{theorem}
\label{kdof}
The full-duplex $K$-pair-user two-way interference channel has $K$ degrees of freedom.
\end{theorem}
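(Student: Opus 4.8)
The plan is to prove the two inequalities $d_{sum}\ge K$ and $d_{sum}\le K$ separately, with essentially all of the work in the converse.

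For achievability I would run two decoupled, \emph{non-adaptive} schemes: Cadambe--Jafar interference alignment on the forward $K$-user IC formed by the odd transmitters and even receivers (i.e. take $X_i[k]=f(M_{ij})$, ignoring the over-the-air side information), and, independently, interference alignment on the backward $K$-user IC formed by the even transmitters and odd receivers. Each one-way $K$-user IC achieves $K/2$ DoF — for time-varying coefficients by \cite{Jafar:2008:alignment}, and for almost all constant coefficients by real interference alignment \cite{Motahari2009}. Since every node is full-duplex and its own (known) self-interference is subtracted, the two schemes run simultaneously over the same band, so $d_{sum}=K/2+K/2=K$; interaction is not even needed.

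For the converse I would use Fano's inequality together with a genie-aided chain, the novelty being the choice of genie that neutralizes the effect of adaptive encoding. The skeleton I would aim for: show that for every unordered pair of user-pairs $\{i,j\}$ with $i\ne j$,
\[
d_{2i-1,2i}+d_{2i,2i-1}+d_{2j-1,2j}+d_{2j,2j-1}\le 2 ,
\]
and then add these $\binom{K}{2}$ inequalities; each pair's DoF-sum $d_{2\ell-1,2\ell}+d_{2\ell,2\ell-1}$ appears in exactly $K-1$ of them, so $(K-1)\,d_{sum}\le 2\binom{K}{2}=K(K-1)$, i.e. $d_{sum}\le K$. To obtain the four-DoF bound I would relabel so that the two pairs are $(1,2)$ and $(3,4)$, apply Fano to the four decoders, and hand each of receivers $1,2,3,4$ a genie consisting of the $2(K-2)$ messages of the other pairs together with just enough of the other nodes' received signals that (i) those nodes' adaptive inputs $X_5,\dots,X_{2K}$ become deterministic functions of the genie, so that their contributions to $Y_1,\dots,Y_4$ are genie-measurable and can be stripped off, leaving a clean $2$-pair TWIC on nodes $1$--$4$; and (ii) the four messages of interest remain (essentially) independent of the genie. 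One then invokes, or re-derives along the same lines, the fact that the $2$-pair full-duplex TWIC has $2$ DoF \cite{zcheng_ISIT, zcheng_Allerton2012}. This argument is a pure entropy/mutual-information computation, insensitive to whether the coefficients are time-varying or constant, so the same converse covers both cases.

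The hard part is precisely steps (i)--(ii). Because of interaction, $X_m[k]=f(M_{m\cdot},Y_m^{k-1})$ and $Y_m^{k-1}$ depends recursively, through the cross links, on the pair-$(1,2)$ and pair-$(3,4)$ inputs; so revealing the other nodes' received signals is simultaneously \emph{needed} to make their inputs computable and \emph{dangerous} because it correlates the genie with the very messages being bounded, which would invalidate the independence used in the single-letterization. The real work is to choose a minimal genie (e.g. only other \emph{receivers'} signals) and to argue that, conditioned on the genie, each of $Y_1,\dots,Y_4$ differs from its clean $2$-pair-TWIC counterpart only by a genie-measurable quantity, so the relevant mutual informations are unchanged, while the subtracted interference terms and any coherent-combining/power-gain contributions stay at the $o(\log P)$ level. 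If this clean reduction is blocked by the correlations, the fallback is a direct genie-chain on the full $K$-pair network (in the spirit of the Cadambe--Jafar $K$-user converse) that never isolates a sub-network: bound each surviving $h(Y_\ell^n\mid\cdot)$ by $n\log P+O(n)$ per receiver and arrange the chain so that exactly $K$ such terms remain.
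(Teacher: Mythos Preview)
Your achievability is exactly the paper's.

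For the converse, the paper uses a different decomposition: it bounds pairs of \emph{same-direction} rates, showing $d_{12}+d_{34}\le 1$ (and by symmetry all such forward pairs, and likewise all backward pairs), then sums the $\binom{K}{2}$ forward inequalities to get $\sum d_{2i-1,2i}\le K/2$ and the backward ones to get $\sum d_{2i,2i-1}\le K/2$. This is per-inequality sharper than your four-term $\le 2$ bound and avoids routing through the $2$-pair TWIC result.

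More importantly, you correctly locate the obstacle --- giving receivers the other nodes' \emph{outputs} as a genie correlates the genie with $M_{12},M_{34}$ --- but you do not resolve it, and the resolution is the technical heart of the proof. The paper's fix is to give \emph{noise differences} rather than received signals. Concretely, to bound $R_{12}+R_{34}$: condition on $M_A$ (all messages except $M_{12},M_{34}$) and on the odd-side noises $Z_3^n,Z_5^n,\dots,Z_{2K-1}^n$; rescale the even outputs $Y_q$ for $q=6,8,\dots,2K$ so that the coefficient of $X_1$ equals $h_{14}$ (rescaled noise $Z_q'$); then hand the genie $\bar Z_q:=Z_q'-Z_4$. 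These differences are independent of every message, so they cost nothing on the independence side, yet together with $M_A$, the odd noises, and $Y_4^n$ they suffice to reconstruct $Y_6^{k-1},\dots,Y_{2K}^{k-1}$ and hence the adaptive inputs $X_6[k],\dots,X_{2K}[k]$ inside the conditioning. One also enlarges $H(Y_2^n\mid\cdot)$ to $H(Y_2^n,Y_3^n,Y_5^n,\dots,Y_{2K-1}^n\mid\cdot)$, which, combined with the odd noises in the conditioning, makes the odd-side adaptive inputs computable as well. After stripping, the surviving term is $H(h_{12}X_1+Z_2\mid h_{14}X_1+Z_4)=o(\log P)$ per symbol, and the remaining positive term is a single $H(Y_4^n)\le n\log P+o(n\log P)$.

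Your reduction-to-$2$-pair plan would need essentially this same noise-difference genie to make the other pairs' adaptive inputs computable without leaking $M_{12},M_{34}$; once you have it, the direct chain the paper runs is shorter than first reducing and then invoking \cite{zcheng_ISIT}. The takeaway: when adaptation makes revealing $Y_m^{k-1}$ dangerous, reveal instead differences between suitably rescaled noises --- message-independent, but enough to rebuild the interactive encoders step by step.
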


\begin{proof}
We use the achievability scheme used in demonstrating the DoF of the one-way $K$-user IC (\cite{Jafar:2008:alignment} for time-varying channel and \cite{Motahari2009} for (almost all) constant channels) in each $\rightarrow$ and $\leftarrow$ direction simultaneously with non-interactive nodes (i.e. each direction uses this scheme and ignores the past received signals, a non-interactive scheme).  By making the appropriate correspondences, we see that $K/2$ DoF are achievable in each direction, leading to a sum DoF of $K$. This assumes that self-interference is able to be perfectly cancelled. 

Now we prove the converse, which is valid for both time-varying and constant channel gains. This outer bound carefully merges techniques used in the one-sided or Z interference channel \cite{costa_interference, sason, Sato:degraded, etkin_tse_wang} (asymmetric genies to the two receivers in one direction), in the symmetric $K$-user interference channel with feedback  \cite{Mohajer2012} (providing differences between noises as genies), and in determining the DoF of the $K$-user interference channel \cite{Jafar:2008:alignment} (re-scaling of channel coefficients). This combination is new and relies on novel constructions, and is more involved than the individual pieces given the larger number of messages and noises  involved and the fact that we allow for adaptation. This bound can also be extended to the symmetric Gaussian channel model and used to show a constant gap to capacity result as in \cite{Cheng:GlobalSIP} (i.e. adaptation or interaction, in some regimes of the symmetric Gaussian channel ,can only improve capacity to within an additive constant gap).

First rewrite channel outputs in \eqref{sm2} for $q=6,8,...,2K$ as
\begin{align}
Y_q^\prime [k]=h_{14}[k]X_1[k]+\sum_{m=2}^K\frac{h_{14}[k]}{h_{1,q}[k]}h_{2m-1,q}[k]X_{2m-1}[k]+Z_q^\prime [k]\label{rewrite}
\end{align}
where $Z_q^\prime [k]\sim \mathcal{CN} (0,\frac{h_{14}^2[k]}{h_{1,q}^2[k]})$. Let $M_A$ denote all the messages except $M_{12}, M_{34}$, and let $Z_{3,...,2K-1}[k]$ denote noises $Z_3[k],Z_5[k],...,Z_{2K-1}[k]$. Define $\bar{Z}_q[k]=Z_q^\prime [k]-Z_4[k], q=6,8,...,2K$, which is ${\cal CN}(0, 1+\frac{h_{14}^2[k]}{h_{1,q}^2[k]})$. 
Let $\bar{Z}_{6,...,2K}[k]$ denote $\bar{Z}_6[k], \bar{Z}_8[k],...,\bar{Z}_{2K}[k]$. 
We start by bounding the sum of a pair of rates: 
\begin{align}
&n(R_{12}+R_{34}-\epsilon) \nonumber\\
&\overset{(a)}{\leq} I(M_{34};Y_4^n|M_A, Z_{3,...,2K-1}^n)\nonumber\\
& \;\;\;\;\;\;\;\; +I(M_{12};Y_2^n,Y_4^n|M_{34},M_A, Z_{3,...,2K-1}^n)\nonumber\\
&\overset{(b)}{\leq} I(M_{34};Y_4^n,\bar{Z}_{6,...,2K}^n |M_A,Z_{3,...,2K-1}^n) \nonumber\\
& \;\;\;\;\;\;\;\; +I(M_{12};Y_2^n,Y_4^n,\bar{Z}_{6,...,2K}^n|M_{34},M_A,Z_{3,...,2K-1}^n)\nonumber
\end{align}
\begin{align}
&=H(Y_4^n,\bar{Z}_{6,...,2K}^n|M_A,Z_{3,...,2K-1}^n)\nonumber \\
& \;\;\;\;\;\; -H(Y_4^n,\bar{Z}_{6,...,2K}^n|M_{34},M_A,Z_{3,...,2K-1}^n)\nonumber\\
&\;\;\;\;\;\;\;\;+H(Y_2^n,Y_4^n,\bar{Z}_{6,...,2K}^n|M_{34},M_A,Z_{3,...,2K-1}^n)\nonumber\\
& \;\;\;\;\;\;\;\;\;\; -H(Y_2^n,Y_4^n,\bar{Z}_{6,...,2K}^n|M_{12},M_{34},M_A,Z_{3,...,2K-1}^n)\nonumber \\
&=H(Y_4^n,\bar{Z}_{6,...,2K}^n|M_A,Z_{3,...,2K-1}^n)\nonumber\\
& \;\;\;\;\;\; +H(Y_2^n|Y_4^n,\bar{Z}_{6,...,2K}^n,M_{34},M_A,Z_{3,...,2K-1}^n)\nonumber\\
& \;\;\;\;\;\;\;\; -H(Y_2^n,Y_4^n,\bar{Z}_{6,...,2K}^n|M_{12},M_{34},M_A,Z_{3,...,2K-1}^n) \label{eq:bound}
\end{align}
where (a) follows as all messages and noises are independent of each other; (b) by adding the side information $\bar{Z}_{6,...,2K}^n $.
Now we bound the three terms above respectively. We start with the first term as in \eqref{eq:long1} -- \eqref{eq:longend}
\begin{align}
& H(Y_4^n,\bar{Z}_{6,...,2K}^n|M_A,Z_{3,...,2K-1}^n)\label{eq:long1}\\
&\leq H(Y_4^n)+H(\bar{Z}_{6,...,2K}^n)\\
&\overset{(a)}{\leq }  
n(\log(P) + \smallO{\log(P)})+H(\bar{Z}_{6,...,2K}[k])\\
&\leq n(\log(P) + \smallO{\log(P)})\\
& +\sum_{k=1}^n\log (2\pi e)^{K-2}\left(1+\frac{h_{14}^2[k]}{h_{16}^2[k]}\right)\cdots \left(1+\frac{h_{14}^2[k]}{h_{1,2K}^2[k]}\right)\nonumber \\
&=  n(\log(P) + \smallO{\log(P)}) \label{eq:longend}
\end{align}
where in (a) we have used the fact that Gaussians maximize entropy subject to power constraints (which we recall are $P$ at each user and time slot). Due to adaptation, the inputs $X_{2m-1}, m\in \{1,2,...,K\}$ may be correlated, but even if all users are fully correlated and all the transmitters meet the power constraint $P$, $H(Y_4^n) \leq n(\log P +\smallO{\log P})$ as correlation only induces a power gain inside the logarithm for a single antenna receiver.\footnote{We note that a bound of $n\log(P) + \smallO{\log(P)}$ may also be shown to hold for average rather than per symbol power constraints of $P$ at each transmitter using Jensen's inequality and using that $2\sqrt{P_i P_j} \leq P_i+P_j$. } 
Here $f(x) = \smallO{\phi(x)}$  denotes the Landau little-O notation, i.e. that  $\lim_{x\rightarrow \infty} \frac{f(x)}{\phi(x)} =0$.

The second term can be bounded as follows in \eqref{eq:long3}-\eqref{eq:frac2}.
\begin{figure*}
\begin{align}
&H(Y_2^n|Y_4^n,\bar{Z}_{6,...,2K}^n,M_{34},M_A,Z_{3,...,2K-1}^n)\label{eq:long3}\\
&\leq H(Y_2^n,Y_3^n,Y_5^n,...,Y_{2K-1}^n|Y_4^n,\bar{Z}_{6,...,2K}^n,M_{34},M_A,Z_{3,...,2K-1}^n)\nonumber\\
& =\sum_{k=1}^n[H(Y_2[k],Y_3[k],Y_5[k],...,Y_{2K-1}[k]|Y_2^{k-1},Y_3^{k-1},Y_5^{k-1},...,Y_{2K-1}^{k-1},Y_4^n,\bar{Z}_{6,...,2K}^n,M_{34},M_A,\nonumber\\
& X_2^{k},X_3^{k},X_5^{k},...,X_{2K-1}^{k},X_4^n,Z_{3,...,2K-1}^n)]\nonumber\\
&\overset{(c)}{=}\sum_{k=1}^n[H(Y_2[k],Y_3[k],Y_5[k],...,Y_{2K-1}[k]|Y_2^{k-1},Y_3^{k-1},Y_5^{k-1},...,Y_{2K-1}^{k-1},Y_4^n,\bar{Z}_{6,...,2K}^n,M_{34},M_A,\nonumber\\
& X_2^{k},X_3^{k},X_5^{k},...,X_{2K-1}^{k},X_4^n,Z_{3,...,2K-1}^n, X_6[k],X_8[k],...,X_{2K}[k])]\nonumber\\
&\leq \sum_{k=1}^n [H(h_{12}[k]X_1[k]+Z_2[k]|h_{14}[k]X_1[k]+Z_4[k])]\nonumber\\
&\overset{(d)}{\leq}\sum_{k=1}^n\log 2\pi e \left( 1+\frac{h_{12}^2[k]P}{1+h_{14}^2[k]P}\right) \label{eq:frac1} \\
&= n(\smallO{\log(P)}) \label{eq:frac2}
\end{align}
\end{figure*}
where in step (c) we construct $X_6[k]$ in the conditioning because (1), $h_{14}^{k-1} X_1^{k-1}+Z_4^{k-1}$ (this notation is meant to compactly represent the $k-1$ dimensional vector of the $k-1$ equations 
$h_{14}[l]X_1[l] + Z_4[l]$, for $l=1,2,\cdots k-1$) 
can be decoded from $Y_4^n$ since $X_3^{k},X_5^{k},...,X_{2K-1}^{k}$ are known; (2), $\bar{Z}_6^{k-1}$ is known so that $h_{14}^{k-1}X_1^{k-1}+Z_6^{\prime [k-1]}$ can be constructed; and (3), together 
with the knowledge of $X_3^{k-1},X_5^{k-1},...,X_{2K-1}^{k-1}$ and (1), (2),  $Y_6^{\prime [k-1]}$ is constructed; finally (4), perfect CSI at receivers, i.e. knowing $Y_6^{\prime [k-1]}$ is equivalent to knowing 
$Y_6^{k-1}$, and combining this with the knowledge of $M_{65}$, according to the interactive encoding function we can construct $X_6[k]$. Similarly $X_8[k],...,X_{2K}[k]$ can be constructed. (d) follows since Gaussians maximize conditional entropies, as in for example \cite[Equation (30)]{Jafar:2009:relays}.

Finally, the negative third term can be lower bounded as follows: 
\begin{align*}
&H(Y_2^n,Y_4^n,\bar{Z}_{6,...,2K}^n|M_{12},M_{34},M_A,Z_{3,...,2K-1}^n)\\
&\geq H(Y_2^n,Y_4^n,\bar{Z}_{6,...,2K}^n|M_{12},M_{34},M_A,Z_{3,...,2K-1}^n, \nonumber \\
& \hspace{4cm}X_1^n,X_3^n,X_5^n,..., X_{2K-1}^n)\\
&= \; H(Z_2^n,Z_4^n,\bar{Z}_{6,...,2K}^n)\\
&=H(Z_2^n,Z_4^n,Z_6^{\prime n},..., Z_{2K}^{\prime n})\\
&=n\log 2\pi e+n\log 2\pi e+\sum_{k=1}^n\log (2\pi e)^{K-2} \frac{h_{14}^2[k]}{h_{16}^2[k]}\cdots\frac{h_{14}^2[k]}{h_{1,2K}^2[k]} \\
&= n(\bigO{1}) \label{eq:noise},
\end{align*}
where $f(x) = \bigO{\phi(x)}$ denotes that $|f(x)|< A\phi(x)$ for some constant $A$ and all values of $x$. 
Now combining everything, and taking the limit, 
\begin{align}
& d_{12}+d_{34}\leq \limsup_{P\rightarrow\infty} \frac{R_{12}+R_{34}}{\log (P)} = 1+0+0-0=1.
\end{align}
From the above we see that the DoF per pair of users transmitting in the same direction is $1$. Summing over all rate pairs leads to the theorem. 
\end{proof}

\section{DoF of $K$-pair-user two-way IC with an instantaneous MIMO relay}
\label{2KDoF}

In this section, we investigate the DoF of the $K$-pair-user two-way IC with an instantaneous MIMO relay with $M=2K$ antennas in the system model described in Section \ref{modelrelay}. 
{We then make a number of comments on how to reduce the number of antennas at the relay, at the expense of for example diminished achievable degrees of freedom, or requiring partial cognition of the messages at the relay.}  
\subsection{DoF of $K$-pair-user Two-way IC with an instantaneous 2K-antenna Relay}

We show our second main result: that the maximum $2K$ DoF of the $K$-pair-user two-way IC with an instantaneous $2K$-antenna relay is achievable. 

\begin{theorem}
\label{2kdof}
The full-duplex $K$-pair-user two-way interference channel with an instantaneous $2K$-antenna relay has $2K$ degrees of freedom.
\end{theorem}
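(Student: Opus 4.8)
The plan is to treat the converse and the achievability separately, with essentially all the work in the latter. The converse is immediate: the network carries $2K$ independent unicast messages between terminals that each have a single antenna, so even allowing interaction at all nodes, global CSI, and an instantaneous relay with arbitrarily many antennas, a cut-set bound around each source node $i$ (equivalently, a genie revealing to each destination every other transmit signal together with the relay's output) leaves that destination with a single-antenna point-to-point Gaussian link; hence $d_{i,i+1}\le 1$ and $d_{i+1,i}\le 1$ for every pair and $d_{sum}\le 2K$. The content is therefore in the achievability, matching the converse of $2K$.

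For achievability I would give a one-shot linear relaying scheme, then used over independent blocks. Let $\sigma$ be the partner involution ($\sigma(2i-1)=2i$, $\sigma(2i)=2i-1$), so transmitter $m$ wants to reach $\sigma(m)$. In a single channel use each transmitter sends one fresh symbol $X_m=s_{m,\sigma(m)}$; the instantaneous $2K$-antenna relay observes $\mathbf{Y_R}=\sum_{m=1}^{2K}\mathbf{h}_{mR}X_m+\mathbf{Z_R}$ and, using non-causality, transmits the linear combination $\mathbf{X_R}=\mathbf{F}\,\mathbf{Y_R}$ for a fixed $\mathbf{F}\in\mathbb{C}^{2K\times 2K}$ computed from the (known) channel coefficients; there is no circular dependence because at the first channel use the transmit signals depend on the messages only. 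Substituting into \eqref{snr1}--\eqref{smr3}, the effective coefficient from transmitter $m$ to receiver $p$ becomes
\[
h_{mp}^{\mathrm{eff}}=h_{mp}+\mathbf{h_{Rp}^*}\,\mathbf{F}\,\mathbf{h}_{mR},
\]
with $h_{mp}=0$ whenever there is no direct link, while receiver $p$'s noise becomes $Z_p+\mathbf{h_{Rp}^*}\mathbf{F}\mathbf{Z_R}$.

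The design goal is to choose $\mathbf{F}$ so that $h_{mp}^{\mathrm{eff}}=0$ for every ordered pair $(m,p)$ with $p\neq\sigma(m)$: the relay then neutralizes, simultaneously at every receiver, the $K-1$ interference streams seen over direct links, the $K-1$ undesired streams seen only through the relay, and the relayed self-interference. Each requirement is one scalar linear equation in the $4K^2$ entries of $\mathbf{F}$, and there are $2K(2K-1)=4K^2-2K<4K^2$ of them, so the linear system (generically of full row rank, by a Kronecker-product genericity argument) is consistent and under-determined and admits a solution; moreover the $2K$ desired coefficients $h_{\sigma(p),p}^{\mathrm{eff}}$ are non-identically-zero polynomials in the channel gains, so the solution can be taken to keep all of them nonzero, almost surely, under the assumed continuous channel distribution. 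Fixing such an $\mathbf{F}$ (its norm uniformly bounded since the channel magnitudes are bounded away from $0$ and $\infty$) and letting the relay power $P_R=\Theta(P)\to\infty$, every receiver sees an interference-free point-to-point link of signal power $\Theta(P)$ with bounded noise, i.e.\ DoF $1$ per link, for $d_{sum}\ge 2K$; with the converse this gives $d_{sum}=2K$.

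I expect the main (if modest) obstacle to be the simultaneous-feasibility argument: verifying that the affine solution set for $\mathbf{F}$ cut out by the $4K^2-2K$ nulling equations is nonempty and generic enough that the $2K$ desired effective coefficients do not all vanish on it — this is where one uses linearity of $\mathbf{F}\mapsto(h_{mp}^{\mathrm{eff}})$ together with continuity of the channel distribution to push the bad set to measure zero. A secondary point is the power/noise accounting: since the relay forwards its own noise $\mathbf{Z_R}$, one must confirm that choosing $\mathbf{F}$ from the channel realization and scaling $P_R$ linearly with $P$ keeps the effective SNR of each desired link growing like $P$, so the per-link DoF is genuinely $1$. Finally I would note that instantaneity is indispensable here — a causal relay cannot cancel the current-slot symbols at the receivers — which is precisely where this result parts ways with Section~\ref{sec:causal}.
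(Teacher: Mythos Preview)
Your proposal is correct and follows the same one-shot interference-neutralization idea as the paper: the instantaneous $2K$-antenna relay linearly processes its current received vector so that every non-desired effective coefficient at every receiver vanishes, leaving $2K$ clean point-to-point links; the converse is the identical cut-set argument. The only difference is parameterization. The paper has the relay first zero-force to separate the $2K$ symbols and then transmit $\mathbf{X_R}=\sum_i\mathbf{u}_{ij}s_{ij}$, which is exactly your scheme under the change of variables $\mathbf{F}=\mathbf{U}\mathbf{H}^{-1}$ with $\mathbf{H}=[\mathbf{h}_{1R},\ldots,\mathbf{h}_{2K,R}]$ and $\mathbf{U}$ the matrix of beamformers. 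The payoff of their parameterization is that the constraints decouple: each $\mathbf{u}_{ij}$ independently faces $2K-2$ linear conditions (they subtract rather than null the relayed self-interference, saving one constraint per vector), so they can write down an explicit closed-form solution instead of invoking your Kronecker-product genericity/counting argument. Your global-$\mathbf{F}$ formulation is equally valid and is actually more careful about the forwarded relay noise $\mathbf{h_{Rp}^*}\mathbf{F}\mathbf{Z_R}$, which the paper simply absorbs into the high-SNR ``noise terms are ignored'' assumption.
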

\begin{proof}
\subsubsection{Converse}
The converse is trivial since for a $2K$-user, $2K$ message unicast network where all sources and destinations have a single antenna, the maximum degrees of freedom cannot exceed $2K$ by cut-set arguments, even with adaptation/interaction at all nodes. 

\subsubsection{Achievability}

We  propose a simple ``one-shot'' scheme that achieves $2K$ DoF for the $K$-pair-user two-way IC with the help of an instantaneous $2K$-antenna relay. We consider the Gaussian channel model at high SNR, and hence noise terms are ignored from now on.

The $2K$ users each transmit a symbol $s_{ij}$ (from user $i$ to intended user $j$) and the relay receives:
\begin{align*}
\mathbf{Y_R}=\sum_{i=1}^{2K}\mathbf{h_{i,R}}s_{ij}, \mbox{ for the appropriate } j \mbox{ values, see Fig. \ref{fig:kuser}.}
\end{align*}
The $2K$-antenna relay (with global CSI) decodes all $2K$ symbols using a zero-forcing decoder \cite{Goldsmith:2005}, and due to the instantaneous property, transmits the following signal in the same time slot:\begin{align*}
\mathbf{X_R}=\sum_{i=1}^{2K}\mathbf{u_{ij}}s_{ij}
\end{align*} 
where $\mathbf{u_{ij}}$ denote the $2K\times 1$ beamforming vectors carrying signals from user $i$ to intended user $j$. Now at receiver 1 (for example),
\begin{align}
Y_1=\sum_{m=1}^Kh_{2m,1}s_{2m,2m-1}+\mathbf{h_{R1}^*}\mathbf{X_R}.\label{rx1}
\end{align}
To prevent undesired signals from reaching receiver 1, the relay picks beamforming vectors such that 
\begin{align}
\mathbf{u_{ij}}\in null (\mathbf{h_{R1}^*}), \ \ \ i=3,5,...,2K-1, \; j\mbox{ as appropriate,}\label{nullify}
\end{align}
where $null(\mathbf{A})$ denotes the null space of $\mathbf{A}$. Since there are $2K$ antennas at the relay, $null (\mathbf{h_{R1}^*})$ has dimension $2K-1$.

At receiver 1, the interference signals received from the relay are used to neutralize the interference signals received from the transmitters. To do this, we design the beamforming vectors to satisfy:\begin{align}
h_{2m,1}+\mathbf{h_{R1}^*}\mathbf{u_{2m,2m-1}}=0, \ \ \ m=2,3,...,K. \label{intf-ntrlz}
\end{align} 

The $2K\times 1$ beamforming vectors satisfying the needed constraints always exist, by a dimensionality argument, along with the random channel coefficients. To see this, take $\mathbf{u_{34}}$ as an example. We wish to construct $\mathbf{u_{34}}$ such that the following conditions are satisfied:
\begin{align}
&\mathbf{u_{34}}\in null(\mathbf{h_{Rp}^*}), \ \ \ p=1,5,7,...,2K-1\label{null1}\\
&h_{3q}+\mathbf{h_{Rq}^*}\mathbf{u_{34}}=0, \ \ \ q=2,6,8,...,2K.\label{ntr1}
\end{align}  
From $\mathbf{u_{34}}\in null (\mathbf{h_{R1}^*})$ (one condition in \eqref{null1} for $p=1$), we see that there are $2K-1$ free parameters, which are reduced to 2 in order to satisfy the other $K-2$ conditions in \eqref{null1} for $p=5,7,\cdots 2K-1$, and the $K-1$ conditions in \eqref{ntr1}. That is, $(2K-1)-(K-2)-(K-1)=2$. Thus, let $a,b$ be two scalars, let ${\bf A,B}$ be $1\times 2K$ vectors such that the matrix below is invertible, then the following choice of beam forming vector (for example) will satisfy all conditions:
\begin{align}
&\mathbf{u_{34}}=\begin{bmatrix} \mathbf{h_{R1}^*} \\ \mathbf{h_{R2}^*} \\ \mathbf{h_{R5}^*} \\ \mathbf{h_{R6}^*} \\ \vdots \\ \mathbf{h_{R,2K}^*} \\ \mathbf{A} \\ \mathbf{B}\end{bmatrix}^{-1}\begin{bmatrix} 0 \\ -h_{32} \\ 0 \\ -h_{36} \\ \vdots \\ -h_{3,2K} \\ a \\ b \end{bmatrix}. \label{sbfv}
\end{align} 
Note that all the beam forming vectors must also be chosen to satisfy the relay power constraint $P_R$, but that we have sufficient degrees of freedom (choices of a,b) to ensure this, and that this will not affect the DoF in either case, as we will let $P_R\rightarrow \infty$, essentially removing the power constraint.

Still at receiver 1, once the interference signals have been neutralized and the undesired signals have been nulled  (by the above choice of beam forming vectors), and   the self-interference (SI) signal $s_{12}$ has been subtracted off, the received signal in \eqref{rx1} becomes
\begin{align}
Y_1-SI=h_{21}s_{21}+\mathbf{h_{R1}^*}\mathbf{u_{21}}s_{21},
\end{align}
from which the desired signal $s_{21}$ can be easily decoded as long as $h_{21} \neq -{\bf h_{R1}^*}{\bf u_{21}}$, which we may guarantee by proper scaling of ${\bf u_{21}}$. Similar decoding procedures are performed at all other receivers. Note that we have again assumed that self-interference in the full-duplex system is able to be perfectly removed. This provides an ideal upper bound to what is currently realizable, and including the effect of self-interference on rates is beyond the scope of this work, but an interesting topic for future investigation.
\end{proof}

\begin{remark} To achieve $2K$ DoF we have assumed full duplex operation. If instead all nodes operate in half-duplex mode, intuitively the DoF will be halved, i.e. $K$. Indeed, it is trivial to achieve $K$ DoF in  a half-duplex setup: In the first time slot, all $2K$ users transmit a message, and the $2K$-antenna relay listens and  decodes all $2K$ messages using a zero-forcing decoder. At time slot 2, the relay broadcasts a signal and all users listen. {By careful choice of  beamforming vectors as in \eqref{sbfv}, for example,  each receiver receives only their desired message in this time slot.} Therefore $2K$ desired messages are obtained in 2 time slots, i.e. $K$ DoF is achievable. Note however that in the half-duplex setting, the relay is causal rather than non-causal or instantaneous. 
\end{remark}

\begin{remark} We have shown in the previous section that the DoF of the $K$-pair-user two-way interference channel is $K$; Theorem \ref{2kdof} implies that  the addition of an instantaneous $2K$-antenna relay can increase the DoF of the $K$-pair-user two-way IC to $2K$ -- it essentially cancels out all interference in both directions simultaneously. This may have interesting design implications for full duplex two-way interference networks -- i.e. the ability of nodes to operate in full duplex would double  the DoF of the two-way $K$-pair user IC from $K/2$ (each direction time-shares) to $K$; the addition of a full-duplex, instantaneous MIMO relay with $2K$ antennas (for example, a pico-cell) would again double this to $2K$ DoF. 
\end{remark}

\subsection{Comments on reducing the number of antennas at the instantaneous relay}
\label{subsec:antennas}

We now investigate how many DoF can be achieved by using a reduced number of antennas at the instantaneous relay. For simplicity, we first consider the (2-pair-user) two-way IC with an instantaneous 3-antenna relay, for which we propose another one-shot strategy to achieve 3 DoF.  Whether this is the optimal achievable DoF is still open, i.e. unlike in all other sections so far, we have not obtained a  converse. 

\begin{theorem}
For the full-duplex two-way interference channel with an instantaneous 3-antenna relay, 3 degrees of freedom are achievable.
\end{theorem}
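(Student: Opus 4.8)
The plan is to reuse the one-shot, zero-forcing-plus-beamforming idea behind Theorem~\ref{2kdof}, but to compensate for the relay now having one antenna fewer than the number of transmitters by keeping exactly one of the four users silent in each channel use. First I would, in a given channel use, let users $1,2,3$ send fresh symbols $s_{12},s_{21},s_{34}$ and have user $4$ transmit nothing (one may cycle the silent user with period four across channel uses to make the per-link DoF symmetric, but this does not change the sum). The relay then hears only $\mathbf{Y_R}=\mathbf{h_{1R}}s_{12}+\mathbf{h_{2R}}s_{21}+\mathbf{h_{3R}}s_{34}$; since the three channel vectors are generically linearly independent, a zero-forcing decoder recovers all three symbols in that \emph{same} channel use, which is exactly where the instantaneous property is used.

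Next I would have the relay transmit $\mathbf{X_R}=\mathbf{u_{12}}s_{12}+\mathbf{u_{21}}s_{21}+\mathbf{u_{34}}s_{34}$ with $3\times1$ beamforming vectors chosen, just as in \eqref{nullify}--\eqref{intf-ntrlz}, so that at each served receiver every term other than the desired one is either self-interference (known, hence subtracted, as assumed throughout) or is cancelled: at receiver $1$ the undesired relay copy of $s_{34}$ is nulled, $\mathbf{u_{34}}\in null(\mathbf{h_{R1}^*})$; at receiver $2$ the direct interference $h_{32}s_{34}$ is neutralized, $h_{32}+\mathbf{h_{R2}^*}\mathbf{u_{34}}=0$; at receiver $4$ the direct interference $h_{14}s_{12}$ is neutralized, $h_{14}+\mathbf{h_{R4}^*}\mathbf{u_{12}}=0$, and the undesired relay copy of $s_{21}$ is nulled, $\mathbf{u_{21}}\in null(\mathbf{h_{R4}^*})$. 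Receiver $3$ is simply not served in this channel use. After this processing receivers $1$, $2$ and $4$ each see a single nonzero multiple of $s_{21}$, $s_{12}$ and $s_{34}$ respectively and decode, so three fresh symbols are delivered per channel use and $3$ DoF follow.

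I would then verify the two non-degeneracy points exactly by specializing the arguments in the proof of Theorem~\ref{2kdof}. For existence of the beamforming vectors: $null(\mathbf{h_{R1}^*})$ and $null(\mathbf{h_{R4}^*})$ each have dimension $2$, while $\mathbf{u_{34}}$ carries one further linear constraint and $\mathbf{u_{12}}$ carries one, so each $\mathbf{u_{ij}}$ has a nonempty solution set of dimension at least $1$, which I would exhibit by an explicit inverse-matrix construction of the form \eqref{sbfv} with the appropriate rows; since these solution sets contain bounded-norm vectors, the relay power constraint $P_R$ is met for $P_R$ large and is in any case immaterial as $P_R\to\infty$. For decodability: because each $\mathbf{u_{ij}}$ is designed without reference to the channel of the receiver that ultimately decodes from it, the residual effective scalar channels $h_{21}+\mathbf{h_{R1}^*}\mathbf{u_{21}}$, $h_{12}+\mathbf{h_{R2}^*}\mathbf{u_{12}}$ and $h_{34}+\mathbf{h_{R4}^*}\mathbf{u_{34}}$ are nonzero almost surely (and can be kept nonzero by the remaining scaling freedom).

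The step I expect to be the main obstacle is the tightness at $\mathbf{u_{34}}$: it must simultaneously null its undesired copy at receiver $1$ \emph{and} neutralize the direct interference at receiver $2$ inside a null space of dimension only $2$, rather than $2K-1=3$ as in Theorem~\ref{2kdof}; I would need to confirm that the count $2-1=1$ still leaves a generically non-degenerate choice (and that the induced constraint on $h_{34}+\mathbf{h_{R4}^*}\mathbf{u_{34}}$ excludes at most a measure-zero subset of that line), and, more conceptually, to argue that silencing exactly one user is what brings the relay's decoding load down to its antenna count of three. Everything else is a routine specialization of Theorem~\ref{2kdof}.
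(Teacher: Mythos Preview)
Your proposal is correct and follows essentially the same approach as the paper: silence user~4, have the 3-antenna instantaneous relay zero-force the three active symbols, and choose the beamformers exactly as you list ($\mathbf{u_{34}}\in null(\mathbf{h_{R1}^*})$ with $h_{32}+\mathbf{h_{R2}^*}\mathbf{u_{34}}=0$; $\mathbf{u_{21}}\in null(\mathbf{h_{R4}^*})$; $h_{14}+\mathbf{h_{R4}^*}\mathbf{u_{12}}=0$), with the same dimension count and the same optional four-slot cycling to symmetrize the per-user rates.
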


\begin{proof}

We now demonstrate how to transmit 3 symbols (one for each of three users) in one time slot using a 3 antenna relay. Then, this clearly achieves 3 DoF, which is larger than the 2 achievable without a relay, but smaller than the maximal value of 4 (whether anything larger than 3 is achievable is left open). 

Let transmitter 1,2 and 3 transmit symbols $s_{12}, s_{21}$ and $s_{34}$. Transmitter 4 stays silent. The relay, with three antennas is then able to use a zero-forcing receiver to obtain the three transmitted symbols, and then proceeds to transmit 
\begin{align*}
\mathbf{X_R}&=\mathbf{u_{12}}s_{12}+\mathbf{u_{21}}s_{21} + \mathbf{u_{34}}s_{34}.
\end{align*}  
The receivers 1,2 and 4 (since transmitter 4 is not sending anything, receiver 3 has no desired message and we ignore it) then receive the signals:
\begin{align*}
Y_1 & = h_{21}s_{21}+{\bf h_{R1}^*}(\mathbf{u_{12}}s_{12}+\mathbf{u_{21}}s_{21} + \mathbf{u_{34}}s_{34})\\
Y_2 & = h_{12}s_{12}+  h_{32}s_{34}+{\bf h_{R2}^*}(\mathbf{u_{12}}s_{12}+\mathbf{u_{21}}s_{21} + \mathbf{u_{34}}s_{34})\\
Y_4 & = h_{34}s_{34}+h_{14}s_{12}+{\bf h_{R4}^*}(\mathbf{u_{12}}s_{12}+\mathbf{u_{21}}s_{21} + \mathbf{u_{34}}s_{34}).
\end{align*}
At receiver 1, to decode the desired $s_{21}$, we first subtract off the self-interference term ${\bf h_{R1}^*}{\bf u_{12}}s_{12}$ and then design ${\bf u_{34}}$ such that the undesired term in $s_{34}$ disappears, i.e. make
\begin{align}
 {\bf u_{34}} \in null({\bf h_{R1}^*)}. \label{u341}
 \end{align}
 Then receiver 1 is able to decode $s_{21}$ as long as $h_{21}+{\bf h_{R1}^*}{\bf u_{21}} \neq 0$, which may be guaranteed by proper scaling of ${\bf u_{21}}$.
 
 At receiver 2, to decode the desired $s_{12}$, we first subtract off the self-interference term ${\bf h_{R2}^*}{\bf u_{21}}s_{21}$, then neutralize the interference term from $s_{34}$ by selecting ${\bf u_{34}}$ such that
\begin{align}
 h_{32} + {\bf h_{R2}^*}{\bf u_{34}} = 0. \label{u342}
 \end{align} 
 Then receiver 2 is able to decode $s_{12}$ as long as $h_{12}+{\bf h_{R2}^*}{\bf u_{12}} \neq 0$, which may be guaranteed by proper scaling of ${\bf u_{12}}$. 
 
Finally, at receiver 4, there is no self-interference term, only the desired term in $s_{34}$, an interference term in $s_{12}$ and an undesired signal term $s_{21}$.
We may nullify the undesired signal term by selecting
\begin{align}
 {\bf u_{21}} \in null({\bf h_{R4}^*)}. \label{u21}
 \end{align}
Then, select ${\bf u_{12}}$ to neutralize the interference by selecting
\begin{align}
 h_{14} + {\bf h_{R4}^*}{\bf u_{12}} = 0. \label{u12}
 \end{align} 
 Then receiver 2 is able to decode $s_{34}$ as long as $h_{34}+{\bf h_{R4}^*}{\bf u_{34}} \neq 0$, which may be guaranteed by proper scaling of ${\bf u_{34}}$. 
 
Each ${\bf u_{12}, u_{21}, u_{34}}$ is a $3\times 1$ vector. There is one linear constraint on ${\bf u_{12}}$, one linear constraint on ${\bf u_{21}}$, and two linear constraints on ${\bf u_{34}}$. Hence, we have enough degrees of freedom to select all beamforming vectors to satisfy the constraints, and hence achieve 3 DoF in one time slot.
 
We note that this scheme sends one symbol for three of the four users, i.e. the rate $R_{43}=0$ as no message is sent by transmitter 4. Though it does not matter from a DoF perspective (as this is defined as a sum of rates), one may symmetrize the rates by having different subsets of users transmit over 4 time slots. That is, in time slot 1, users 1,2,3 transmit. In time slot 2, users 1,2,4 transmit. In time slot 3, users 1,3,4 transmit, and in time slot 4 users 2,3,4 transmit. In this case, 12 symbols will be transmitted over 4 time slots, and each of the 4 users is able to transmit (or receive) 3 signals in 4 time slots, again leading to 3 DoF. 

\end{proof}

{The above result demonstrates that by reducing the number of antennas at the instantaneous relay from 4 to 3, we have also reduced the achievable DoF from 4 to 3. However, we want to point out that we do not currently have a converse, and more than 3 DoF may still be achievable (but clearly no more than the maximal 4). One may ask how else we might be able to reduce the number of antennas without impacting or decreasing the DoF.  One way is to trade cognition for antennas, as we remark on next. }

\begin{remark}
If we consider a {\it cognitive} relay (cognitive in the sense of having a-priori knowledge of messages, as first introduced in \cite{devroye_IEEE}), which would have access to {\it all} 4 users' signals prior to transmission, the number of antennas at the relay can be reduced to 2, while still being able to achieve the maximum 4 degrees of freedom for the full-duplex two-way IC. The achievability scheme is trivial: the cognitive relay broadcasts all 4 signals (desired for each user) and all users listen. By careful choice of the four $2\times 1$ beamforming vectors to nullify undesired and interference signals, and subtracting the self-interference signal, each receiver is able to obtain the desired signal. Therefore the maximal 4 DoF are achieved.    
\end{remark}

\begin{remark}
We can do even better: if the cognitive relay only knows 2 users' signals, then we are still able to achieve the maximum 4 DoF with 2 antennas at the relay by a simple one-shot scheme. For example, assume user 1 and 2's signals are known at the relay prior to transmission (knowing any 2 of the 4 messages suffices).
Now, each transmitter sends a message $s_{ij}$ and the relay receives 4 messages. Then the 2-antenna relay first subtracts transmitter 1 and 2's messages and uses a zero-forcing decoder to decode the other two messages, and transmits 
\begin{align*}
\mathbf{X_R}&=\mathbf{u_{12}}s_{12}+\mathbf{u_{21}}s_{21}+\mathbf{u_{34}}s_{34}+\mathbf{u_{43}}s_{43}.
\end{align*}  
At receiver 1 (for example):
\begin{align*}
Y_1&=h_{21}s_{21}+h_{41}s_{43}\\
&+\mathbf{h_{R1}^*}\mathbf{u_{21}}s_{21}+\mathbf{h_{R1}^*}\mathbf{u_{43}}s_{43}+\mathbf{h_{R1}^*}\mathbf{u_{12}}s_{12}+\mathbf{h_{R1}^*}\mathbf{u_{34}}s_{34}.
\end{align*}
To decode the desired message $s_{21}$, we subtract off the self-interference signal $s_{12}$; nullify the undesired signal $s_{34}$ by designing the beamforming vector such that $\mathbf{h_{R1}^*}\mathbf{u_{34}}=0$; and neutralize the interference signal $s_{43}$ by setting $h_{41}+\mathbf{h_{R1}^*}\mathbf{u_{43}}=0$. A similar decoding procedure follows for the other receivers, where we note the $2\times 1$ beamforming vectors can be always constructed by the 2-antenna relay. Therefore, each user is able to get 1 desired signal in 1 time slot and the maximal 4 DoF are achieved.
\end{remark}

\section{DoF of K-pair-user two-way interference channel with a causal MIMO relay}
\label{sec:causal}

It is known that for one-way channels where nodes are either sources of destinations of messages but not both as in a two-way setting, the usage of feedback, causal relays (possibly with multiple antennas), and cooperation does not increase the DoF of the network \cite{Jafar:2009:relays}. In the previous section, we showed that a non-causal / instantaneous multi-antenna relay may increase the DoF of a two-way $K$-pair user interference channel to its maximal value of $2K$ (provided we have sufficient number of antennas). Here we show that, in sharp contrast, if the relay is actually causal, it does {\it not} increase the DoF of the $K$-pair-user two-way IC beyond that of a network without the relay present, which would have $K$ DoF ($K/2$ in each direction). This aligns with (and the proof uses similar techniques) the one-way results in \cite{Jafar:2009:relays} in the sense that causal relays again do not help. However, we note that full-duplex operation {\it does} increase the DoF for the two-way networks in this paper, but does not for their one-way counterparts \cite{Jafar:2009:relays}.

We thus consider a $K$-pair-user two-way IC with one causal MIMO relay which has $M$ antennas. The system model is the same as that in Section \ref{modelrelay}, where we recall that the relay is now causal, and hence 
\[ {\bf X_R}[k] = g_k({\bf Y_R}[1], {\bf Y_R}[2], \cdots {\bf Y_R}[k-1]), \]
where ${\bf X_R}[k]$ is an $M\times 1$ vector signal transmitted by the relay at time $k$, $g_k()$ are deterministic functions for each $k=1,2,\cdots n$, and ${\bf Y_R}[k]$ is the $M\times 1$ vector of signals received by the relay at time slot $k$.  Let $P=P_R$ for simplicity (we simply need $P$ and $P_R$ to scale to infinity at the same rate). Our third main result is the following.

\smallskip

\begin{theorem}
The DoF of the $K$-pair-user full-duplex two-way interference channel with a causal MIMO relay is $K$.
\end{theorem}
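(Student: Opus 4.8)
The achievability direction is immediate from Theorem~\ref{kdof}: running the one-way $K$-user IC scheme of \cite{Jafar:2008:alignment,Motahari2009} in the $\rightarrow$ and $\leftarrow$ directions simultaneously with non-adaptive nodes and an idle relay already attains $K/2+K/2=K$ DoF, and a (causal) relay can only enlarge the achievable set. So the work is entirely in the converse, and the plan is to extend the genie-aided outer bound of Theorem~\ref{kdof} to absorb the relay.

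The one new phenomenon is the relay, and the key structural fact I would exploit is causality: $\mathbf{X_R}[k]=g_k(\mathbf{Y_R}[1],\dots,\mathbf{Y_R}[k-1])$ depends only on outputs strictly before time $k$. Hence, once a receiver is handed the relay noise $\mathbf{Z_R}^n$ together with enough channel inputs $X_i^{l}$ for $l\le k-1$, it can recompute $\mathbf{Y_R}^{k-1}$ and therefore the relay's transmission $\mathbf{X_R}[k]$; this recursion is well founded precisely because the relay reads only the past, and it is exactly what fails for the instantaneous relay of Theorem~\ref{2kdof}. This mirrors, and would reuse techniques from, the one-way relay result of \cite{Jafar:2009:relays}, the new wrinkle being that here the source/destination nodes also interact (adapt their inputs to their own past outputs).

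Concretely I would start from the same chain as in the proof of Theorem~\ref{kdof}: bound $n(R_{12}+R_{34}-\epsilon)$ by a sum of two mutual informations, insert the asymmetric noise genies $\bar Z_{6,\dots,2K}^n$ and the rescaled outputs $Y_q'$, and in addition supply the relay noise $\mathbf{Z_R}^n$ as side information to both mutual-information terms (legal, since side information only increases mutual information). Decomposing into entropies gives the analogue of \eqref{eq:bound}. The first term stays $n(\log P+o(\log P))$, because the relay contributes only a scalar $\mathbf{h}_{Rp}^*[k]\mathbf{X_R}[k]$ to each single-antenna output -- since $P_R=P$ this is a power gain inside the logarithm -- while $H(\mathbf{Z_R}^n)=O(n)$ is negligible for the DoF. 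The negative (third) term is again lower bounded by a noise entropy of order $n$: there we condition on all messages and all noises, and causality lets us reconstruct every channel input including $\mathbf{X_R}^n$, so after conditioning on them the term collapses to $H$ of the Gaussian noises. For the middle term, $H(Y_2^n\mid Y_4^n,\bar Z_{6,\dots,2K}^n,\mathbf{Z_R}^n,M_{34},M_A,Z_{3,\dots,2K-1}^n)$, one reconstructs, time step by time step as in step~(c) of that proof (now also using $\mathbf{Z_R}^n$ to unwind the relay), every channel input \emph{except} $X_1$, so that $Y_2[k]$ and $Y_4[k]$ each become $h[k]X_1[k]$ plus a known causal functional of $X_1^{k-1}$ plus Gaussian noise; a Gaussian-maximizes-conditional-entropy estimate as in step~(d) then drives this term to $o(n\log P)$. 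Putting the three pieces together gives $d_{12}+d_{34}\le 1$, and summing over all source--destination pairs in both directions yields $d_{\mathrm{sum}}\le K$.

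The hard part, I expect, is exactly that middle term. With the relay active, $Y_2$ and $Y_4$ carry $\mathbf{X_R}$ at the same power scale as $X_1$, and since $M_{12}$ is not in the conditioning one cannot simply subtract $\mathbf{X_R}^n$ off; a careless bound would leave the residual relay term at $\Theta(\log P)$ and yield only the trivial converse $d_{\mathrm{sum}}\le 2K$. Getting it down to $o(\log P)$ requires choosing the genie carefully -- which inputs and which scalar functionals of the relay state are revealed, and to which receiver -- so that the residual relay contribution at receiver~$2$ is either reconstructible from $Y_4^n$ or charged against a term that is already $\Theta(\log P)$, all without paying an extra $\log P$ in the first term, and uniformly in the relay's antenna count $M$ and under full interaction. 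Threading that bookkeeping is the core of the new outer bound.
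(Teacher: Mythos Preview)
Your achievability is fine and matches the paper. For the converse, your concrete plan---graft $\mathbf{Z_R}^n$ onto the Theorem~\ref{kdof} genie and otherwise rerun that argument---does not close the gap you yourself flag. The step-by-step reconstruction in step~(c) breaks: $\mathbf{X_R}[k]$ depends on $\mathbf{Y_R}^{k-1}$, which contains $\mathbf{h_{1R}}[l]X_1[l]$, and $X_1^{k-1}$ is never available (its message $M_{12}$ is the one being bounded). Knowing $\mathbf{Z_R}^n$ and a noisy \emph{scalar} view of $X_1$ through $Y_4$ does not let you rebuild the \emph{vector} $\mathbf{h_{1R}}X_1$, so $\mathbf{X_R}[k]$ stays an unknown $M$-dimensional functional of $X_1^{k-1}$; this then contaminates every other output $Y_3,Y_5,\dots,Y_{2K}$, so you cannot reconstruct $X_6,\dots,X_{2K}$ either, and the rescaled-noise trick $\bar Z_q$ no longer cancels the $X_1$-dependence. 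What remains in the middle term is $H\bigl(h_{12}X_1[k]+\mathbf{h_{R2}^*}\mathbf{X_R}[k]+Z_2\,\big|\,h_{14}X_1[k]+\mathbf{h_{R4}^*}\mathbf{X_R}[k]+Z_4,\ldots\bigr)$ with a genuinely $M$-dimensional unknown, and the Gaussian-maximizes step~(d) does not apply because nothing here is jointly Gaussian.

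The paper takes a different route, borrowed from \cite{Jafar:2009:relays}: it first \emph{co-locates} the causal relay with user $2K{-}1$ (cooperation can only enlarge capacity), producing a $2K$-node network where node $2K{-}1$ has $M{+}1$ antennas, output $\mathbf{\tilde Y_{2K-1}}=[Y_{2K-1},\mathbf{Y_R}]$, and interactive encoder $\mathbf{\tilde X_{2K-1}}[k]=\tilde f(M_{2K-1,2K},\mathbf{\tilde Y_{2K-1}}^{k-1})$. It then drops the noise-difference genies entirely and uses a more generous asymmetric genie: hand \emph{all} outputs $\tilde Y^n_{(2,\dots,2K)\setminus 4}$---including the full vector $\mathbf{\tilde Y_{2K-1}}^n$, hence $\mathbf{Y_R}^n$ itself---to the second mutual information only. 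Now every input except $\tilde X_1$ is reconstructible directly from its own message and its own past output, so the middle term collapses to a SIMO conditional entropy of linear functions of the scalar $\tilde X_1[k]$ given $h_{14}\tilde X_1[k]+\tilde Z_4[k]$, which is $o(\log P)$ uniformly in $M$; the first term stays $H(\tilde Y_4^n\mid M_A)\le n\log P+o(n\log P)$ since $\tilde Y_4$ is still scalar. The idea you are missing is not a sharper noise genie but the merging step that turns the relay into an ordinary interactive user whose full output can be given away in the second term without inflating the first.
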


\begin{proof}
Achievability follows from the fact that the DoF of the $K$-pair-user two-way interference channel without a relay is $K$, as shown in Section \ref{KDoF}. 

\begin{figure}
\begin{center}
\includegraphics[width=3.7in]{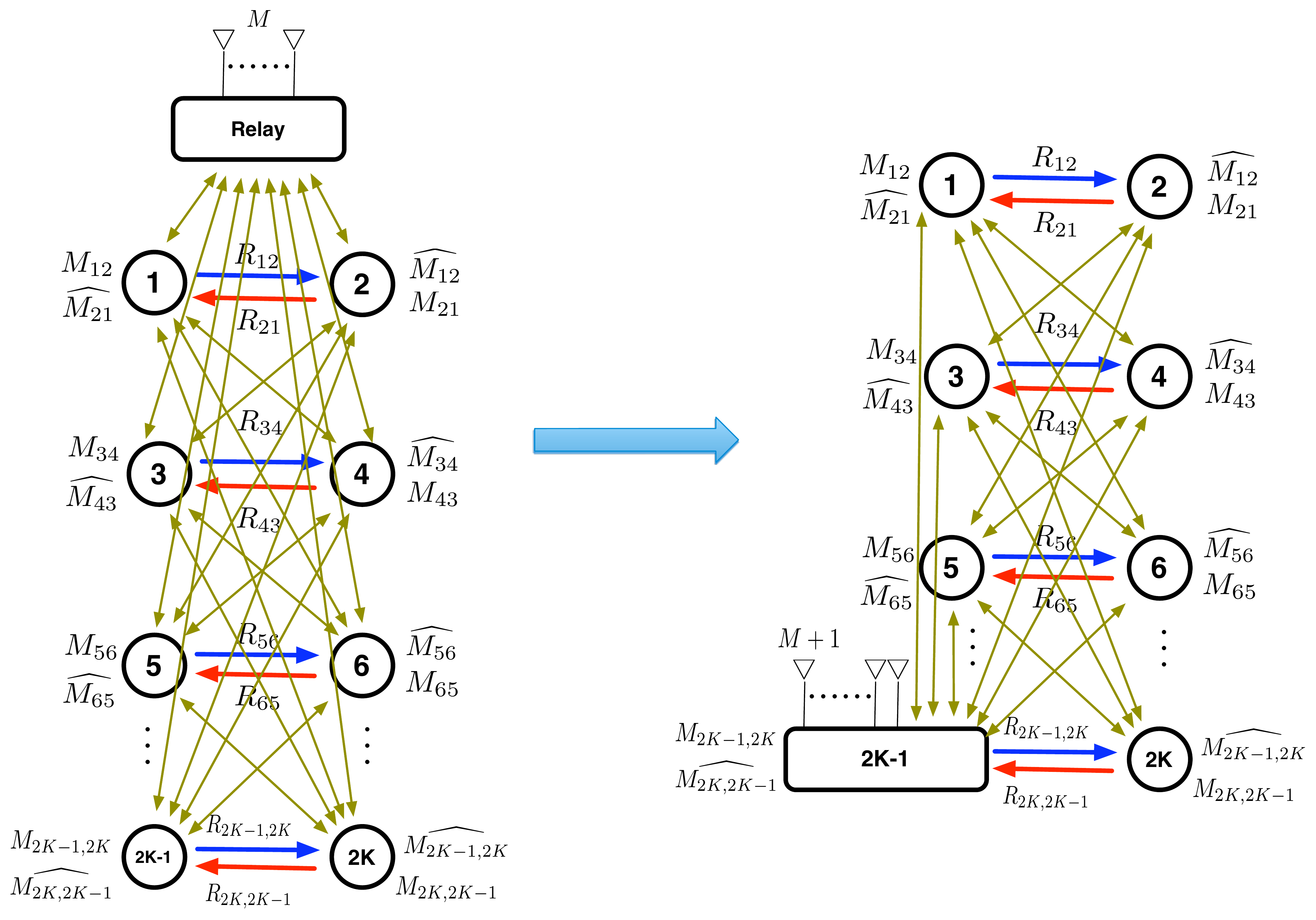}
\caption{Transformation of the K-pair-user full-duplex two-way interference channel with a causal MIMO relay.}
\label{fig:transform}
\end{center}
\end{figure}
Now we prove the converse. Inspired by \cite{Jafar:2009:relays}, we first transform our $2K+1$ node network to a $2K$-node network as shown in Fig. \ref{fig:transform}.
Since cooperation between nodes cannot reduce the DoF, 
  we let the causal MIMO relay fully cooperate with one of the users, take user $2K-1$ WLOG. In other words, we co-locate user $2K-1$ and the relay or put infinite capacity links between these nodes. Then the capacity region of the original network is outer bounded by that of the following $2K$-node network which each have one message and desire 1 message as before: All users except user $2K-1$ each have a single antenna, while user $2K-1$ has $M+1$ antennas (one from the original node $2K-1$, and $M$ from the relay). Since the original relay is connected to all $2K$ users, user $2K-1$ in the transformed network is connected to all other users, in contrast to the original network where there is no direct link between users 1,3,...,$2K-3$ and $2K-1$. Then, letting  the tilde $\tilde{A}$ notation denote the inputs, outputs and channel gains of the new network, we have the correspondences (or equivalences $\equiv$ for inputs, since they may actually be different due to interaction based on different received signals)
  \begin{align*}
 &\tilde{X_i}  \equiv X_i,  \;\; i=1,2,...,2K, \mbox{except} \ 2K-1, \;\; {\bf  \tilde{X_{2K-1}}}^T   \equiv [X_{2K-1}, {\bf X_R}^T], \\
 &\tilde{Z_i}  \equiv Z_i,  \;\; i=1,2,...,2K, \mbox{except} \ 2K-1, \;\; {\bf  \tilde{Z_{2K-1}}}^T   \equiv [Z_{2K-1}, {\bf Z_R}^T], \\
&  \tilde{h_{ij}}  = h_{ij}, \;\; \mbox{for appropriate} \ i,j \ \mbox{and} \ i,j\neq 2K-1 \\
&\tilde{{\bf h_{i,2K-1}}}^T  = [0, {\bf h_{iR}}^T], \;\; i=1,3,...,2K-3,  \\
&\tilde{{\bf h_{i,2K-1}}}^T  = [h_{i,2K-1}, {\bf h_{iR}}^T], \;\; i=2,4,...,2K,  \\
&\tilde{{\bf h_{2K-1,j}}}^T  = [0, {\bf h_{Rj}}^T],  \;\; j=1,3,...,2K-3, \\
&\tilde{{\bf h_{2K-1,j}}}^T  = [h_{2K-1,j}, {\bf h_{Rj}}^T],  \;\; j=2,4,...,2K,
  \end{align*}
and the following input/output relationships at each channel use:
\begin{align}
&\tilde{Y_p}[k]=\sum_{m=1}^K\tilde{h_{2m,p}}[k]\tilde{X_{2m}}[k]+\mathbf{\tilde{h_{2K-1,p}^*}}[k]\mathbf{\tilde{X_{2K-1}}}[k]+\tilde{Z_p}[k], \nonumber \\
& \hspace{4.5cm}  p=1,3,...,2K-3\\
&\tilde{Y_q}[k]=\sum_{m=1}^{K-1}\tilde{h_{2m-1,q}}[k]\tilde{X_{2m-1}}[k]+\mathbf{\tilde{h_{2K-1,q}^*}}[k]\mathbf{\tilde{X_{2K-1}}}[k]+\tilde{Z_q}[k], \nonumber \\ & \hspace{4.5cm} q=2,4,...,2K\\
&\mathbf{\tilde{Y_{2K-1}}}[k]=\sum_{m=1,m\neq 2K-1}^{2K}\mathbf{\tilde{h_{m,2K-1}}}[k]\tilde{X_m}[k]+\mathbf{\tilde{Z_{2K-1}}}[k].
\end{align}
We have the  interactive encoding functions at each node
\begin{align}
&\tilde{X_i}[k]=\tilde{f_i}(M_{ij},\tilde{Y_i}^{k-1}),  \ \ \ \ \ i=1,2,...,2K, \mbox{except} \ 2K-1\\
&\mathbf{\tilde{X_{2K-1}}}[k]=\tilde{f_{2K-1}}(M_{2K-1,2K},\mathbf{\tilde{Y_{2K-1}}^{k-1}})\label{causal}
\end{align}
where \eqref{causal} is where the causality of the relay is observed / incorporated.


Let $M_A$ denote all the messages except $M_{12},M_{34}$, and let $\tilde{Y}_{(2,...,2K)/4}$ denote $\tilde{Y_2},\tilde{Y_3},\tilde{Y_5},...,\tilde{Y_{2K}}$ i.e. all outputs except $\tilde{Y_1}$ and $\tilde{Y_4}$. Note $\tilde{Y}_{(2,...,2K)/4}$ includes the outputs vector $\mathbf{\tilde{Y}_{2K-1}}$ at user $2K-1$. Similarly, $\tilde{X}_{(2,...,2K)/4}$ and $\tilde{Z}_{(2,...,2K)/4}$ denote all inputs and noises except those at nodes 1 and 4.

We now bound the sum-rate in each direction, considering the sum of a pair of rates,  and starting with Fano's inequality, we will have
\begin{align*}
&n(R_{12}+R_{34}-\epsilon) \\
&\leq I(M_{34};\tilde{Y_4}^n|M_A)+I(M_{12};\tilde{Y_4}^n,\tilde{Y}^n_{(2,...,2K)/4}|M_{34},M_A)\\
&\leq H(\tilde{Y_4}^n|M_A)-H(\tilde{Y_4}^n|M_{34},M_A)\\
& \;\; +H(\tilde{Y_4}^n,\tilde{Y}^n_{(2,...,2K)/4}|M_{34},M_A)\\
& \;\;\;\; -H(\tilde{Y_4}^n,\tilde{Y}^n_{(2,...,2K)/4}|M_{34},M_A,M_{12})\\
& =H(\tilde{Y_4}^n|M_A)+H(\tilde{Y}^n_{(2,...,2K)/4}|\tilde{Y_4}^n,M_{34},M_A) \\
& \;\;\;\; -H(\tilde{Z_4}^n,\tilde{Z}^n_{(2,...,2K)/4})\\
&= H(\tilde{Y_4}^n|M_A)-H(\tilde{Z_4}^n)+H(\tilde{Y}^n_{(2,...,2K)/4}|M_{34},M_A,  \tilde{Y_4}^n)\\
& \;\;\;\; -H(\tilde{Z}^n_{(2,...,2K)/4})\\
&\leq \sum_{k=1}^n [H(\tilde{Y_4}[k])-H(\tilde{Z_4}[k])\\
&+H(\tilde{Y}_{(2,...,2K)/4}[k]|\tilde{Y}_{(2,...,2K)/4}^{k-1},M_{34},M_A,\tilde{Y_4}^n, \tilde{X_4}^n,\tilde{X}_{(2,...,2K)/4}^{k})\\
& \;\;\;\;\;\;\;\;\;\; -H(\tilde{Z}_{(2,...,2K)/4}[k])]\\
&\leq n(\log P +o(\log P))\\
&+\sum_{k=1}^n[H(\tilde{h_{12}}[k]\tilde{X_1}[k]+\tilde{Z_2}[k], \tilde{Z_3}[k],\cdots, \mathbf{\tilde{h_{1,2K-1}}}[k]\tilde{X_1}[k] \\
& \;\;+\mathbf{\tilde{Z_{2K-1}}}[k],\tilde{h_{1,2K}}[k]\tilde{X_1}[k]+\tilde{Z_{2K}}[k]|\tilde{h_{14}}\tilde{X_1}[k]+\tilde{Z_4}[k])\\
&\;\;\;\; -H(\tilde{Z}_{(2,...,2K)/4}[k])]\\
&\leq n(\log P +o(\log P))+no(\log P),
\end{align*}
where the last step follows as it may be shown that the Gaussian distribution maximizes conditional entropy, as done in \cite[Equation (30), (31)]{Jafar:2009:relays}, similar to \cite[Lemma 1]{host_madsen_int}, and similar to \eqref{eq:frac1}, \eqref{eq:frac2}. Note also that the conditional entropy term involves a single-input, multiple output term, and hence is again bounded by $no(\log P)$, due to the conditioning.

Similarly, in the opposite direction, let $M_B$ denote all the messages except $M_{21},M_{43}$: 
\begin{align*}
&n(R_{21}+R_{43}-\epsilon) \\
&\leq I(M_{21};\tilde{Y_1}^n|M_B)+I(M_{43};\tilde{Y_1}^n,\tilde{Y}^n_{(2,...,2K)/4}|M_{21},M_B)\\
&\leq H(\tilde{Y_1}^n|M_B)-H(\tilde{Y_1}^n|M_{21},M_B)\\
& \;\;+H(\tilde{Y_1}^n,\tilde{Y}^n_{(2,...,2K)/4}|M_{21},M_B)\\
& \;\;\;\; -H(\tilde{Y_1}^n,\tilde{Y}^n_{(2,...,2K)/4}|M_{21},M_B,M_{43})
\end{align*}
\begin{align*}
& =H(\tilde{Y_1}^n|M_B)+H(\tilde{Y}^n_{(2,...,2K)/4}|\tilde{Y_1}^n,M_{21},M_B)\\
& \;\;-H(\tilde{Z_1}^n,\tilde{Z}^n_{(2,...,2K)/4})\\
&= H(\tilde{Y_1}^n|M_B)-H(\tilde{Z_1}^n)+H(\tilde{Y}^n_{(2,...,2K)/4}|M_{21},M_B,  \tilde{Y_1}^n)\\
& \;\;\;\; -H(\tilde{Z}^n_{(2,...,2K)/4})\\
&\leq \sum_{k=1}^n [H(\tilde{Y_1}[k])-H(\tilde{Z_1}[k])\\
&+H(\tilde{Y}_{(2,...,2K)/4}[k]|\tilde{Y}_{(2,...,2K)/4}^{k-1},M_{21},M_B,\tilde{Y_1}^n, \tilde{X_1}^n,\tilde{X}_{(2,...,2K)/4}^{k})\\
& \;\;\;\;\;\;\;\;\;\; -H(\tilde{Z}_{(2,...,2K)/4}[k])]\\
&\leq n(\log P +o(\log P))\\
&+\sum_{k=1}^n[H(\tilde{Z_2}[k], \tilde{h_{43}}[k]\tilde{X_4}[k]+\tilde{Z_3}[k],\tilde{h_{45}}[k]\tilde{X_4}[k]+\tilde{Z_5}[k], \cdots \\
&\mathbf{\tilde{h_{4,2K-1}}}[k]\tilde{X_4}[k]+\mathbf{\tilde{Z_{2K-1}}}[k],\tilde{Z_{2K}}[k]|\tilde{h_{41}}\tilde{X_4}[k]+\tilde{Z_1}[k])\\
&\;\;\;\; -H(\tilde{Z}_{(2,...,2K)/4}[k])]\\
&\leq n(\log P +o(\log P))+no(\log P),
\end{align*}
Then, 
\begin{align*}
d_{12}+d_{34} + d_{21}+d_{43} &\leq \limsup_{P\rightarrow\infty} \frac{R_{12}+R_{34} +R_{21}+R_{43}}{\log (P)}\\
&  \leq 1+0+0 + 1+0+0 = 2,
\end{align*}
Summing over all rate pairs (see Remark 5) leads to the theorem, which indicates that the causal MIMO relay cannot increase the DoF of the full-duplex two-pair user two-way IC. 

\end{proof}
\begin{remark}
We are able to sum over all rate pairs because the asymmetry of the transformed network (multiple antennas at user $2K-1$ only, and user $2K-1$ is connected to all other nodes, unlike the even and odd numbered nodes)  does not affect the DoF. Intuitively this is because for a SIMO or MISO point-to-point channel, the DoF is still 1. More rigorously,  consider the sum rate pair $R_{12}+R_{2K-1,2K}$ and using similar notation (now $M_A$ denotes all messages except $M_{12},M_{2K-1,2K}$), and following the same steps as in bounding $R_{12}+R_{34}$, we notice that the bounds do not depend on the asymmetry and again lead to 1 DoF per pair:\footnote{We leave out several steps and replace it with $\cdots$ to avoid repetition, as these follow identically.}
\begin{align*}
&n(R_{12}+R_{2K-1,2K}-\epsilon) \\
&\leq I(M_{2K-1,2K};\tilde{Y_{2K}}^n|M_A)\\
& \;\;+I(M_{12};\tilde{Y_{2K}}^n,\tilde{Y}^n_{(2,...,2K-1)}|M_{2K-1,2K},M_A)\\
&  \leq ...\\
&\leq \sum_{k=1}^n [H(\tilde{Y_{2K}}[k])-H(\tilde{Z_{2K}}[k])\\
&+\sum_{k=1}^n[H(\tilde{h_{12}}[k]\tilde{X_1}[k]+\tilde{Z_2}[k], \tilde{Z_3}[k],\tilde{h_{14}}[k]\tilde{X_1}[k]+\tilde{Z_4}[k], \tilde{Z_5}[k],...,\\
&\mathbf{\tilde{h_{1,2K-1}}}[k]\tilde{X_1}[k]+\mathbf{\tilde{Z_{2K-1}}}[k]|\tilde{h_{1,2K}}\tilde{X_1}[k]+\tilde{Z_{2K}}[k])\\
&\;\;\;\; -H(\tilde{Z}_{(2,...,2K-1)}[k])]\\
&\leq n(\log P +o(\log P))+no(\log P),
\end{align*}
Thus we will have $d_{12}+d_{2K-1,2K}\leq 1$. Similar arguments follow for the opposite direction.
\end{remark}


\section{conclusion}
\label{conclusion}
We proposed and studied  the $K$-pair-user two-way interference channel with and without a MIMO relay where all nodes operate in full duplex. 
We demonstrated that the degrees of freedom of the $K$-pair-user two-way IC without a relay is $K$, which indicates that full-duplex operation doubles the DoF over the setting with half-duplex nodes for this two-way setting, but that interaction, or adapting transmission based on previously received signals at the users cannot further increase the DoF beyond what full-duplex allows, i.e. the DoF is just that of two one-way, non-interactive ICs. 
We next showed that if we introduce a $2K$ antenna, full-duplex and non-causal relay, that the DoF may again be doubled over the full-duplex, relay-free counterpart (or quadrupled over the half-duplex counterpart). We demonstrated a one-shot scheme to achieve the maximal $2K$ DoF. In sharp contrast, if the relay is causal rather than non-causal, we derived a new converse showing that the DoF cannot be increased beyond $K$ for a $K$-pair-user two-way full-duplex IC.  
We commented on how one may decrease the number of antennas at the relay node, at the expense of either a reduced achievable DoF or cognition at the relay. However, a converse for the $K$-pair user TWIC with an instantaneous relay with fewer than $2K$ antennas is open. 
Overall, this work has shown that in $K$-pair-user two-way interference channels, full-duplex operation at least doubles the achievable DoF (over half-duplex systems), interaction does not help (unless some channel gains are zero), and a full-duplex relay may further increase the DoF (quadrupling the DoF over a half-duplex system) if it is instantaneous and has a sufficient number of antennas.

\section*{Acknowledgement}
The authors would like to thank Tang Liu, a Ph.D. student at UIC, for suggesting the ``one-shot'' scheme in Section \ref{2KDoF}, which is simpler than the authors' previously derived block Markov coding scheme, to achieve $2K$ DoF.

\bibliographystyle{IEEEtran}
\bibliography{refs}

\end{document}